\newtheorem*{Problem}{Problem}
\newtheorem{Theorem}{Theorem}
\newtheorem{Remark}{Remark}
\newtheorem{Assumption}{Assumption}
\newcommand{\m}[1]{\boldsymbol{#1}}
\newcommand{\mc}[1]{\mathcal{#1}}
\newcommand{\mb}[1]{\mathbb{#1}}
\newcommand{\tb}[1]{\boldsymbol{#1}}
\newcommand{\tx}[1]{\text{#1}}
\title{\LARGE \bf
Distance-based Formation Tracking with Unknown Bounded Reference Velocity}
\author{
Dung Van Vu\authorrefmark{2}, Minh Hoang Trinh\authorrefmark{2}, Hyo-Sung Ahn\authorrefmark{3}
\thanks{${}^{\dag}$ Department of Automatic Control, School of Electrical Engineering, Hanoi University of Science and Technology (HUST), Hanoi, Vietnam. E-mail: {dung.vv150726@sis.hust.edu.vn,  minh.trinhhoang@hust.edu.vn}}
\thanks{${}^{\ddag}$ School of Mechanical Engineering, Gwangju Institute of Science and Technology (GIST), Gwangju 500-712, Republic of Korea (hyosung@gist.ac.kr)}
}
\begin{document}

\maketitle
\thispagestyle{empty}
\pagestyle{empty}

{\abstract{ TThis paper studies a distance-based formation tracking problem in the $d$-dimensional space. The formation has a directed acyclic leader-following structure with $d$ leaders moving at a same unknown bounded velocity. Distributed control laws are proposed for follower agents to maintain the desired formation shape and move at the leaders' velocity in finite time. Simulation results are also provided to support the theoretical analysis.}}


\section{Introduction}
Formation-type collective behaviors such as bird flocking, fish schooling, or V-shaped flying formation are frequently observed in nature. A number of research works suggests that formation-type collective behaviors serve survival needs of animals (evading predator, hunting, foraging, etc.) and these behaviors require the individuals to interact or communicate with each other. Inspired by formation-type behaviors, formation control has been extensively studied. Beyond theoretical meaning, formation control also finds applications in satellite formations, robotics, and sensor networks \cite{anderson2008rigid}.

Formation control can be categorized into several categories: position-, displacement-, distance-, bearing vector-, angle-, and mixed-sensing based approaches \cite{ahn2020formation}. The difficulties of these approaches mainly depend on the amount of information that each agent can be accessed to solve the problem. Since position- and displacement- based approaches  require knowledge of the global reference frame \cite{Oh2015}, bearing- and angle-based approaches have no control on formation's scale \cite{zhao2015tac}, the distance-based approach has been the focus of many research works  \cite{hendrickx2007directed,oh2011formation,tian2013global,sun2016exponential,trinh2017comments}. In distance-based formation control, each agent senses the relative positions of its neighboring agents in its local reference frame \cite{kang2014distance} and control some desired distances with other agents. As a result, there is no need of a global reference coordinate system. However, the lack of global information makes the control design and analysis of this approach more challenging. 

Existing works on distance-based formation control mostly focused on the formation acquisition process using gradient descent control laws \cite{sun2015rigid,cao2008control}. The distance-based formation tracking problem is more challenging because the agents have to achieve the desired formation shape and follow a reference trajectory simultaneously. In \cite{rozenheck2015proportional}, an undirected formation tracking problem was studied, where the authors introduced a nonlinear PI controller so that the agents can track a leader moving at a constant velocity. The authors in \cite{kang2014distance} studied a directed leader-follower formation and proposed a velocity estimator to estimate the leaders' velocities. However, the velocity estimator in \cite{kang2014distance,ahn2020formation} is applicable when the leaders' velocity is constant or vanishing exponentially fast. The authors in \cite{yang2016weighted} studied a distance-based problem where the reference velocity of the formation centroid is estimated (by a finite-time consensus algorithm) and tracked by all agents. Finally, formation control with bounded disturbances has been also recently studied in \cite{Mehdifar2019prescribed,Vu2020LCSS}.

In this paper, we study a distance-based formation tracking problem for single-integrator modeled agents in ${\mb{R}}^d~(d=2,3)$. There are $d$ leaders moving with the same bounded reference velocity. The remaining agents, called followers, can measure the relative positions of $d$ neighboring agents in their local reference frame and only know an upper bound of the reference velocity. The sensing/controlling topology among agents in the formation is described by a directed acyclic graph. Then, there exists a first follower agent, whose neighbors are $d$ leaders. A finite-time control law is proposed so that the first follower can achieve all the desired distances with regard to the leaders and follow the leader's velocity at the same time. The proposed control law consists of two components: the first component drives the agent to a time varying desired position in a finite time and the second component tracks the unknown reference velocity. It is noted that the finite-time formation control component in this paper is different from existing ones in the literature
\cite{park2014finite, sun2014finite, Pham2018finite}. As the leader's velocity is unknown to the follower, their movements are considered as disturbances acting on the follower's motion. Thus, the second component of our control law, which includes a signum function, is included to  eliminate the effect of the leaders' movement to the formation. As the formation has a leader-following structure, the stability analysis starts from the first follower, whose neighbors are the $d$ leaders. After the first follower achieved all desired distances and moved at the same velocity with the leaders, it can be considered as a new leader, and the local stability of the formation follows from mathematical induction.

The remainder of this paper is organized as follows. Section 2 provides the related background and main assumptions before formulating the main problem studied in this paper. The proposed control law and stability analysis are presented in Section 3. Simulation results are given in Section 4, and Section 5 concludes the paper.

\begin{figure}[t]
    \centering
    \includegraphics[width=8cm]{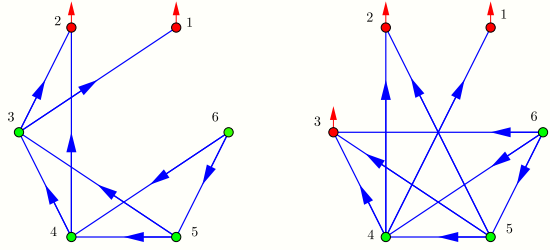}
    \caption{Examples of graphs constructed by Procedure 1 in ${\mb{R}}^2$ (left) and in ${\mb{R}}^3$ (right). Red points: leaders, Green points: followers, Solid red arrows: leaders' velocities.}
    \label{Pre: Procedure 1}
\end{figure}

\section{Preliminaries}
\subsection{Notation}
 In this paper, $\mb{R}$, ${\mb{R}}^d$ and ${\mb{R}}^{m\times n}$ denote the set of real numbers, the $d$-dimensional Euclidean space and the set of $m\times n$ real matrices, respectively.  $\Vert\cdot\Vert_1 $,  $\Vert\cdot\Vert $, and $\Vert\cdot\Vert_\infty $ denote the 1-, 2-, and $\infty$-norm of a vector. Let $\tb{u}=[u_1,\ldots, u_d]^\top$ be a vector in ${\mb{R}}^d$. Denote $\vert\tb{u}\vert = [\vert u_1\vert,\ldots,\vert u_d\vert]^\top$,  $\tx{sgn}(\tb{u}) = [\tx{sgn} (u_1),\ldots,\tx{sgn}(u_d)]^\top$ and $\tx{sgn}^\alpha(\tb{u})=\tb{u}/\Vert \tb{u} \Vert^{1-\alpha}$, where $\alpha>0$ and $\tx{sgn}(\cdot)$ is the signum function. 
 $\tx{det}(\cdot)$, $\Vert \cdot \Vert_F$, $\tx{vec}(\cdot)$, and $\odot$ denote the determinant, the Frobenius norm, the vectorization operation, and the Hadamard product, respectively. 
 
\subsection{Problem Formulation}
Consider a system of $n$ single-integrator agents in the $d$-dimensional space ($d=2, 3$). Let $\tb{p}_1$,\ldots, $\tb{p}_n \in {\mathbb{R}}^{d}$ denote the positions of the agents written in a global reference frame. Let $\mc{L}=\{ 1,\ldots,d \}$ and $\mc{F}=\{d+1,\ldots,n\}$ be sets of leaders and followers. Suppose that in the system,  $d$ leader agents have knowledge about the global reference frame and they are moving with the same reference velocity. More specifically, the leaders' motions are governed by the equations
$ \label{eq:leader}
\dot{\tb{p}}_i = \tb{f}(t),~ i\in\mc{L},
$
where $\tb{f}(t) \in {\mathbb{R}}^{d}$ is assumed to be a bounded, continuous reference velocity which is the same for all leaders. The other agents are followers and each follower $i \in \mc{F}$ maintains a local reference frame $^i \sum$. The motion of a follower $i$ can be expressed in its local reference frame as follows:
$
    \dot{\tb{p}}_i^i = \tb{u}_i^i,~ i \in\mc{F},
$
where $\tb{\tb{p}}_i^i$ and $\tb{\tb{u}}_i^i \in {\mathbb{R}}^d$ are the position of agent $i$ and its control input expressed in $^i\sum$, respectively. The relative position of agent $j$ with regard to agent $i$ in $^i\sum$, or the local displacement between $i$ and $j$, is defined as $\tb{p}_{ji}^i\triangleq\tb{p}_j^i - \tb{p}_i^i$. Further, we denote $\tb{p}_{ji}\triangleq\tb{p}_j-\tb{p}_i$ as the displacement written in $^g\sum$. The distance between $i$ and $j$ is denoted by $d_{ji}\triangleq\Vert \tb{p}_{ji}\Vert$. 

A directed graph $\mc{G}=\left ( \mc{V},\mc{E} \right )$ is used to characterize the interaction among the agents in the system, where $\mc{V}$ is the set of  nodes and $\mc{E}$ is the set of edges. Each node in $V = \{1, \ldots, n\}$ represents an agent in the system. The neighbor set of node $i$ is  defined as ${\mc{N}}_i :=\{ j \in {\mc{V}}: (i,j) \in {\mc{E}}\}$. Then, $(i,j)\in {\mc{E}}$ or $j \in {{\mc{N}}_i}$ means that agent $i$ senses the relative position of agent $j$ with respect to $^i \sum$ and controls the distance $d_{ij}$ between them. 

In this paper, we consider formations having a leader-following sensing/controlling graph $\mc{G}$ constructed by the following procedure, which is called \emph{Procedure 1}: 
\begin{itemize}
    \item Start with $d$ vertices $1, 2, \ldots, d$.
    \item For $i\in\mc{F}$, add a vertex $i$ together with $d$ directed edges $(i,j)$ to the current graph, where $j \in \{i-d, \ldots, i-1\}$.
\end{itemize}
Example of graphs constructed from Procedure 1  are given in Fig.~\ref{Pre: Procedure 1}.

From the construction of $\mc{G}$, it is clear that $\mc{G}$ is directed and acyclic.  Furthermore, for each follower $i\in\mc{F}$, ${{\mc{N}}_i = \{i-d,\ldots,i-1\} }$ and ${|{\mc{N}}_i| = d}$. Next, define $\bar{\mc{G}} = (\mc{V},\bar{\mc{E}})$, where $\bar{\mc{E}} = {\mc{E}} \cup \{(i,j): i,j \in {\mc{L}}\}$. 

A formation is defined by $(\bar{\mc{G}},\tb{p})$, where  $\bar{\mc{G}}$ characterizes the distance constraints between $n$ agents and $\tb{p}\triangleq \tx{vec}\left(\tb{p}_1,\ldots,\tb{p}_n\right)$ is a realization of $\bar{\mc{G}}$ in ${\mb{R}}^{d}$. The realization $p$ induces a set of realizable distance constraints ${\Gamma} = \{d_{ji} = \Vert \tb{p}_j-\tb{p}_i\|:~(i,j) \in \mc{E}\}$. Conversely, a set $\Gamma$ is realizable if there exists a realization $\tb{p}$ satisfying all the distance constraints in $\Gamma$.

Two realizations $\tb{p}$ and $\tb{q}$ of $\bar{\mc{G}}$ are congruent if and only if $\|\tb{p}_{ji}\| = \|\tb{q}_{ji}\|$, $\forall i,j \in \mc{V}$. A realization $p$ is \emph{rigid} if there exists $\epsilon>0$ such that all realizations $\tb{q}$ of the distance set induced by $\tb{p}$ and satisfying $\|\tb{p}-\tb{q}\|<\epsilon$ are congruent to $\tb{p}$. The graph $\bar{\mc{G}}$ is generically rigid if almost all realizations of $\bar{\mc{G}}$ are rigid \cite{hendrickx2007directed,yu2007three}. The following assumptions will be adopted in this paper:
\begin{Assumption} \label{assumption:1} The graph $\mc{G}$ is constructed from Procedure 1. The graph $\bar{\mc{G}}$ is generically rigid. 
\end{Assumption}

\begin{Assumption} \label{assumption:2}
The set of desired distances $\Gamma^* = \{d_{ij}^*\}_{(i,j) \in \bar{\mc{E}}}$ is realizable and there exists a desired realization $\tb{p}^*$ of $\Gamma^*$ which is rigid.
\end{Assumption}

\begin{Assumption} \label{assumption:3}
The leaders are initially positioned at $\tb{p}_i(0)$ such that $\|\tb{p}_{ji}(0)\| = d_{ji}^*, \forall i, j\in\mc{L}$ and they move at the same velocity 
$\tb{f}(\m{p},t)$, where $\tb{f}(\m{p},t)$ is an unknown continuous vector function in ${\mb{R}}^d$ such that $\Vert \tb{f} \Vert \le \gamma$. The initial positions of each follower $i\in\mc{F}$ and its neighbors are not degenerate in ${\mb{R}}^d$. 
\end{Assumption}

Assumptions~\ref{assumption:2} and \ref{assumption:3} imply that for any $i\in\mc{F}$, the dimension of the affine span of $\tb{p}_i^{\ast}$ and $\tb{p}_j^{\ast}$ for $j \in {\mc{N}}_i$ is $d$.  
The problem studied in this paper is stated as follows:

\begin{Problem}
Given the $n$-agent system satisfying Assumptions~\ref{assumption:1}--\ref{assumption:3},  design control laws for the followers so that:~(i) $\Vert {\tb{p}}_j(t)-{\tb{p}}_i(t) \Vert \to d_{ij}^*, \forall (i, j) \in \bar{\mc{E}}$,~(ii) $\|\dot{\tb{p}}_i(t)-\dot{\tb{p}}_j(t)\| \to 0, \forall i, j \in \mc{V}$,~(iii) $\Vert \tb{u}_i^i(t)\Vert$ is bounded, $~\forall i \in \mc{V}$.
\end{Problem}

\begin{figure}[t]
    \centering
    \includegraphics[width=8cm]{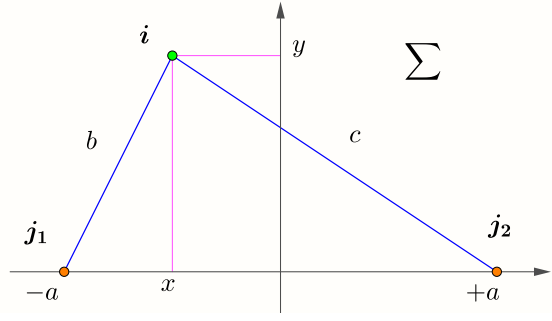}
    \caption{Calculate $\vartheta$ in ${\mb{R}}^2$}
    \label{He toa do}
\end{figure}

\section{Main Results}
\subsection{Proposed control laws}
Consider the first follower $i = d+1$. For an arbitrary number  $\alpha \in (0,1)$, we propose the following control law written in the local reference frame of agent $i$:
\begin{align}\label{Pro 1: Control Law}
\begin{cases}
\m{z}^i_i &=  - \sum\limits_{j \in {\cal N}_i} e_{ji} \m{p}_{ji}^i\\
\m{u}_i^i&=  - k \tx{sgn}^\alpha (\m{z}^i_i) -\gamma\text{sgn}\left( \m{z}_i^i \right),
\end{cases}
\end{align}
where $e_{ji} = d_{ji}^2 - d_{ji}^{*2},~\forall j \in {\mc{N}}_i$ are the squared distance errors and $k>0$ is a constant number. In \eqref{Pro 1: Control Law}, the component $- k \tx{sgn}^\alpha (\m{z}^i_i)$ is used to achieve the desired distances in a finite time and the component $-\gamma\text{sgn}\left( \m{z}_i^i \right)$ handles the uncertainty in the leaders' velocity. The control law \eqref{Pro 1: Control Law} is distributed since each agent $i$ uses only the relative positions of its neighboring agents.
\subsection{Stability analysis}
Denote $\m{P}= [\m{p}_{j_1i}^i, \ldots, \m{p}_{j_di}^i]$, $\m{P}^*= [\m{p}_{j_1i}^{i*}, \ldots, \m{p}_{j_di}^{i*}] \in {{\mb{R}}^{d \times d}}$, ${\m{e} = [e_{j_1i},\ldots, e_{j_di}]^\top \in {\mathbb{R}}^d}$, where $j_k \in {\mc{N}}_i$. Let $\mc{S}=\{\text{vec}(\m{P}): {\m{z}}_i^i=\m{0} \}$ and $\mc{D}= \{\text{vec}(\m{P}): \m{e}=\m{0} \}$. 

\begin{Theorem}\label{Theorem: bi chan}
Under the control law \eqref{Pro 1: Control Law}, $\text{vec}(\m{P})$ approaches the set $\mc{S}$ as $t\to \infty$. Moreover, $\m{u}^i_i$ is bounded.
\end{Theorem}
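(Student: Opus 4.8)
The plan is to use a Lyapunov argument built on the squared-distance errors, exploiting the fact that the robustifying signum term in \eqref{Pro 1: Control Law} is designed to dominate the unknown leader velocity. I would work entirely in the local frame $^i\Sigma$ of the first follower $i=d+1$, whose neighbours are the $d$ leaders. Since $^i\Sigma$ has fixed orientation, the leaders' common velocity reads $\m{f}^i$ with $\|\m{f}^i\|=\|\m{f}\|\le\gamma$, and for every $j\in\mc{N}_i$ one has $\dot{\m{p}}^i_{ji}=\m{f}^i-\m{u}^i_i$. Take the frame-independent candidate
$$V=\tfrac14\|\m{e}\|^2=\tfrac14\sum_{j\in\mc{N}_i}e_{ji}^2,$$
which depends only on the distances $d_{ji}$. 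Differentiating with $\dot e_{ji}=2(\m{p}^i_{ji})^\top\dot{\m{p}}^i_{ji}$ and using the gradient identity $\sum_{j}e_{ji}\m{p}^i_{ji}=-\m{z}^i_i$ gives $\dot V=-(\m{z}^i_i)^\top(\m{f}^i-\m{u}^i_i)$.

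Substituting the control law is the crux. With the two elementary identities $(\m{z}^i_i)^\top\tx{sgn}^\alpha(\m{z}^i_i)=\|\m{z}^i_i\|^{1+\alpha}$ and $(\m{z}^i_i)^\top\tx{sgn}(\m{z}^i_i)=\|\m{z}^i_i\|_1$ this becomes
$$\dot V=-(\m{z}^i_i)^\top\m{f}^i-k\|\m{z}^i_i\|^{1+\alpha}-\gamma\|\m{z}^i_i\|_1.$$
The key step is a componentwise comparison: since $|f^i_k|\le\|\m{f}^i\|\le\gamma$ for each coordinate $k$, one has $-(\m{z}^i_i)^\top\m{f}^i-\gamma\|\m{z}^i_i\|_1=-\sum_k|z_k|\bigl(\tx{sgn}(z_k)f^i_k+\gamma\bigr)\le 0$, so the disturbance is absorbed exactly by the $\gamma\,\tx{sgn}(\cdot)$ term, leaving
$$\dot V\le -k\|\m{z}^i_i\|^{1+\alpha}\le 0.$$

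From $\dot V\le 0$ I conclude $V(t)\le V(0)$, hence all $e_{ji}$, and therefore all distances $d_{ji}$ and displacements $\m{p}^i_{ji}$, stay bounded; consequently $\m{z}^i_i$ is bounded, and since $\|k\,\tx{sgn}^\alpha(\m{z}^i_i)\|=k\|\m{z}^i_i\|^{\alpha}$ and $\|\gamma\,\tx{sgn}(\m{z}^i_i)\|\le\gamma\sqrt d$, the input $\m{u}^i_i$ is bounded — this settles the ``moreover'' claim and rules out finite escape. For convergence to $\mc{S}$, integrate the inequality to get $\int_0^\infty\|\m{z}^i_i(t)\|^{1+\alpha}\,dt\le V(0)/k<\infty$ and invoke Barbalat's lemma: the bounds above make $\dot{\m{z}}^i_i$ bounded, so $\m{z}^i_i$ is uniformly continuous, hence so is $t\mapsto\|\m{z}^i_i(t)\|^{1+\alpha}$, which therefore tends to $0$. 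Thus $\m{z}^i_i\to\m{0}$, i.e. $\tx{vec}(\m{P})\to\mc{S}$.

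The main obstacle I expect is twofold. First, the sign bookkeeping in the domination step must be done carefully so that the unknown $\m{f}$ cancels cleanly; this rests on the coordinatewise bound $|f^i_k|\le\gamma$, which follows from $\|\m{f}\|\le\gamma$. Second, Barbalat's lemma must be justified in the presence of the discontinuous signum control, where trajectories are understood in the Filippov sense and the boundedness of $\dot{\m{z}}^i_i$ holds only almost everywhere — which is still enough for Lipschitz, hence uniform, continuity of $\m{z}^i_i$. I would also emphasize that the theorem asserts convergence only to $\mc{S}=\{\m{z}^i_i=\m{0}\}$, which is weaker than $\mc{D}=\{\m{e}=\m{0}\}$; separating the desired configuration from possible degenerate equilibria is deferred and is not needed for this statement.
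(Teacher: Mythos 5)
Your proof is correct and follows essentially the same route as the paper: the same Lyapunov function $V=\frac{1}{4}\sum_{j\in\mc{N}_i}e_{ji}^2$, the same absorption of the unknown velocity by the $\gamma\,\tx{sgn}(\cdot)$ term (your componentwise bound is equivalent to the paper's norm inequality $\vert(\m{z}^i_i)^\top\m{f}^i\vert\le\Vert\m{z}^i_i\Vert_1\Vert\m{f}^i\Vert_\infty\le\gamma\Vert\m{z}^i_i\Vert_1$), yielding the identical estimate $\dot V\le -k\Vert\m{z}^i_i\Vert^{\alpha+1}$ and the same boundedness argument for $\m{u}^i_i$. If anything, your integrability-plus-Barbalat step (with the Filippov-solution caveat) makes explicit the convergence to $\mc{S}$ that the paper simply asserts from the dissipation inequality, so it is a faithful and slightly more complete rendering of the paper's argument.
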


\begin{proof}
We denote the leaders' velocity in the reference frame of agent $i$ by $\m{f}^i$. There exists a rotation matrix $\m{U}_i$ such that $\m{f}^i=\m{U}^i\m{f}$. Thus, the following inequality holds
$\Vert \m{f}^i \Vert = \Vert \m{U}^i\m{f}\Vert = \Vert \m{f} \Vert \le \gamma .$
Consider the positive definite  Lyapunov function
$
V = \frac{1}{4}\sum\limits_{j \in {\mc{N}}_i} {e_{ji}^2}.
$
It follows from the chain rule \cite{shevitz1994lyapunov}  that
\begin{align} \label{Pro 1: Su dung Chain Rule}
  \dot{V}  \in\nabla {V^\top}\text{K}\left[\tx{vec}\left(\dot{\m{p}}^i_i, \dot{\m{p}}^i_{j_1},\ldots,\dot{\m{p}}^i_{j_d}, 1\right) \right],
\end{align}
where
\begin{align}
\begin{cases}
\nabla V &= \tx{vec}\left(\m{z}^i_i,e_{j_1i}\m{p}^i_{j_1i},\ldots,e_{j_di}\m{p}^i_{j_di}, 0  \right)  \\
\text{K}[\dot{\m{p}}_i^i] &= - k\text{sgn}^\alpha(\m{z}_i^i)- \gamma\text{K}\left[ {{\rm{sgn}}\left( \m{z}_i^i \right)} \right] \\
\text{K}[\dot{\m{p}}_{j_m}^i] &= \{\m{f}^i\}.
\end{cases}
\end{align}
Note that 
$x\cdot\tx{K}[x]=\{\vert x\vert\}~\forall x\in\mb{R}$, substituting this result into \eqref{Pro 1: Su dung Chain Rule} yields
\begin{align*} \label{Pro 1: Tinh dao ham cua V cuoi cung}
 \dot{V}   =  - k \|\m{z}_i^i\| ^{\alpha+1} - \gamma \Vert\m{z}_i^i\Vert_1 - {(\m{z}_i^i)^\top}\m{f}^i.
\end{align*}
Based on the vector norm inequalities, we obtain
\begin{align}
    \vert {(\m{z}_i^i)^\top}\m{f}^i \vert \le \Vert \m{z}_i^i \Vert_1\Vert \m{f}^i\Vert_\infty\le \Vert \m{z}_i^i \Vert_1\Vert \m{f}^i\Vert\le\gamma\Vert \m{z}_i^i \Vert_1.
\end{align}
Therefore, $\dot{V}$ is evaluated as follows
\begin{align}\label{Danh dia dao ham V}
  \dot{V} \le  - k \Vert \m{z}_i^i \Vert ^{\alpha+1},
\end{align}
which implies $\tx{vec}(\m{P})$ tends to $\mc{S}$ as $t\to\infty$. Furthermore, the boundedness of $V$ can be obtained from \eqref{Danh dia dao ham V} as follows
\begin{align}\label{Pro 1: Danh gia tinh bi chan cua V}
 \sum\limits_{j \in {\mc{N}}_i} {e_{ji}^2}=4V\left( t \right)\le 4V(0).
\end{align}
From \eqref{Pro 1: Danh gia tinh bi chan cua V} and the definition of $e_{ji}$, we have $e_{ji} $ and $\Vert \m{p}_{ji} \Vert $ are also bounded for all $j\in {\mathcal{N}}_i$, which leads to the boundedness of $\m{z}_i^i$. Thus, combining with the fact that $k$ is bounded, we conclude $\m{u}^i_i$ is bounded.
\end{proof}
Note that distance errors $e_{ji}$ and $V$ are invariant with a change of the coordinate frame. Consider the reference frame $\sum$ which moves at the leaders' velocity. In $\sum$, the leaders do not move and we can determine the value of $V$ if $\tx{vec}(\m{P})$ in $\mc{S}$, for example in ${\mb{R}}^2$, as follows. Suppose the origin of $\sum$ is the middle point of two leaders of $i$ (i.e., $j_1$ and $j_2$). Denote $d^*_{j_1j_2}$, $d^*_{j_1i}$, and $d^*_{j_2i}$  by $2a$, $b$, $c$ as can be seen in Fig. \ref{He toa do}. From the definition of $\mc{S}$ and $\mc{D}$, it is clear that $\tx{vec}(\m{P})\in\mc{S}\setminus\mc{D}$ if and only if $y=0$ and 
\begin{align*}
\left((x + a)^2 - b^2 \right)(x + a) 
 + \left((x - a)^2- c^2 \right)( x - a) = 0.
\end{align*}
Therefore, the set $\mc{S}\setminus\mc{D}$ can be determined and its cardinality is finite (less than 4), so we can calculate  $\min_{\mc{S}\setminus\mc{D}} V$. This number is denoted by $\vartheta$. Note that $\vartheta$ only depends on the desired distances. From definitions of $\mc{D}, \mc{S}, \vartheta$ and $V$, it is clear that $\mc{D}$ and $\mc{S}$ are the sets of global minimum extrema and potential extrema of the function $V(t)=V(\tx{vec}(\m{P}))$, which implies that $\vartheta$ is the smallest local minimum value of $V$. Therefore, the condition $V(0)<\vartheta$ means that the initial value of $V$ is less than its smallest local minimum value, which leads to the fact that $V$ tends to the global minimum value as $t\to\infty$ because $V(t)$ is nonincreasing. We have the following theorem regarding this fact.

\begin{Theorem}\label{Pro 1: Lemma Dieu khien du de tien ve D}
Under the control law \eqref{Pro 1: Control Law}, if $V(0)<\vartheta$ then $\text{vec}(\m{P}) \to \mc{D}$ in finite time. After that time, the velocity of agent $i$ is equal to the leader's velocity. 
\end{Theorem}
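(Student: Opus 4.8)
The plan is to combine the energy estimate \eqref{Danh dia dao ham V} of Theorem~\ref{Theorem: bi chan} with a sublevel-set confinement argument, upgrade the resulting asymptotic convergence to finite-time convergence using the $-k\,\tx{sgn}^\alpha(\m{z}_i^i)$ component, and then identify the motion on $\mc{D}$ as a sliding mode whose equivalent control matches the leaders' velocity. Throughout I use the identity $\m{z}_i^i=-\m{P}\m{e}$ (immediate from the definitions of $\m{z}_i^i$, $\m{P}$, $\m{e}$) together with $V=\tfrac14\|\m{e}\|^2$.

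First I would record that \eqref{Danh dia dao ham V} makes $V$ nonincreasing, so $V(t)\le V(0)<\vartheta$ for all $t\ge 0$ and the trajectory stays in the sublevel set $\Omega=\{\tx{vec}(\m{P}):V\le V(0)\}$. Since $e_{ji}=d_{ji}^2-d_{ji}^{*2}$ and $V$ are bounded, each $\|\m{p}_{ji}^i\|$ is bounded, so $\Omega$ is compact. By the definition $\vartheta=\min_{\mc{S}\setminus\mc{D}}V$, every spurious equilibrium satisfies $V\ge\vartheta$ and is therefore excluded from $\Omega$; hence $\mc{S}\cap\Omega=\mc{D}$. Combined with the conclusion $\tx{vec}(\m{P})\to\mc{S}$ of Theorem~\ref{Theorem: bi chan}, this already yields convergence to $\mc{D}$.

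To obtain \emph{finite} time I would split the motion in two stages. Away from $\mc{D}$, on $\Omega\setminus\mc{B}_\rho$ for a small neighbourhood $\mc{B}_\rho$ of the nondegenerate realizations making up $\mc{D}$ (Assumptions~\ref{assumption:2}--\ref{assumption:3}), compactness and $\mc{S}\cap\Omega=\mc{D}$ force $\|\m{z}_i^i\|\ge\epsilon>0$, so \eqref{Danh dia dao ham V} gives $\dot V\le -k\epsilon^{\alpha+1}$ and the trajectory enters $\mc{B}_\rho$ after a finite time. Inside $\mc{B}_\rho$ the matrix $\m{P}$ is close to a nondegenerate configuration, hence $\sigma_{\min}(\m{P})\ge\sigma_0>0$ and $\|\m{z}_i^i\|=\|\m{P}\m{e}\|\ge 2\sigma_0\sqrt{V}$; substituting into \eqref{Danh dia dao ham V} yields
\begin{align*}
\dot V\le -k\,(2\sigma_0)^{\alpha+1}\,V^{(\alpha+1)/2},
\end{align*}
and since $0<(\alpha+1)/2<1$ the standard finite-time comparison lemma drives $V$, and therefore $\tx{vec}(\m{P})$, to $\mc{D}$ in finite time $T$.

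For $t\ge T$ one has $\m{e}=\m{0}$ and $\m{z}_i^i=\m{0}$. Differentiating gives $\dot e_{ji}=2(\m{p}_{ji}^i)^\top(\m{f}^i-\m{u}_i^i)$, so keeping $\m{e}\equiv\m{0}$ forces $\m{P}^\top(\m{f}^i-\m{u}_i^i)=\m{0}$; nonsingularity of $\m{P}$ on $\mc{D}$ then gives the equivalent control $\m{u}_i^i=\m{f}^i$, i.e.\ $\dot{\m{p}}_i=\m{f}$, the leaders' velocity. This sliding motion is admissible because at $\m{z}_i^i=\m{0}$ the Filippov image of \eqref{Pro 1: Control Law} is $[-\gamma,\gamma]^d$ while $\|\m{f}^i\|_\infty\le\|\m{f}^i\|=\|\m{f}\|\le\gamma$, so $\m{f}^i$ lies in it. I expect the main obstacle to be the finite-time estimate rather than the sliding-mode step: one must secure a \emph{uniform} positive lower bound on $\sigma_{\min}(\m{P})$ exactly where the $V^{(\alpha+1)/2}$ bound is invoked, since $\Omega$ may in principle contain singular configurations. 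The hypothesis $V(0)<\vartheta$ is precisely what confines the trajectory to the basin of $\mc{D}$ free of spurious (degenerate) equilibria, and the two-stage split is arranged so that the $\sigma_0$-estimate is only used on the neighbourhood $\mc{B}_\rho$ on which $\m{P}$ is guaranteed nonsingular.
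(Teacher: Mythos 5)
Your overall route is essentially the paper's: confine the trajectory to the sublevel set $\Omega=\{V\le V(0)\}$ using $V(0)<\vartheta$ so that $\mc{S}\cap\Omega=\mc{D}$, secure a uniform positive lower bound on the smallest singular value of $\m{P}$ near $\mc{D}$, convert \eqref{Danh dia dao ham V} into $\dot V\le -c\,V^{(\alpha+1)/2}$ and finish by a finite-time comparison argument, then differentiate $e_{ji}\equiv 0$ and use nonsingularity of $\m{P}^*$ to conclude $\m{v}_i^i=\m{f}^i$. Two of your deviations are improvements in exposition: the explicit first stage ($\dot V\le -k\epsilon^{\alpha+1}$ away from $\mc{D}$, by compactness) gives a quantitative reaching time where the paper merely invokes the asymptotic convergence of Theorem~\ref{Theorem: bi chan} to assert entry into $\{V\le\epsilon\}$ at some finite $T_1$; and the Filippov admissibility check at $\m{z}_i^i=\m{0}$ (that $\m{f}^i$ lies in the image $[-\gamma,\gamma]^d$) makes the sliding-mode step rigorous where the paper leaves it implicit.

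The one genuine gap is in your second stage: the inequality $\dot V\le -k(2\sigma_0)^{\alpha+1}V^{(\alpha+1)/2}$ is valid only \emph{while} the trajectory remains in $\mc{B}_\rho$, and nothing in your argument prevents it from exiting $\mc{B}_\rho$ before $V$ reaches zero --- a metric neighbourhood of $\mc{D}$ is not forward invariant for this flow, so the comparison lemma cannot be applied on $[T_1,\infty)$ as written. The paper avoids exactly this by working with the sublevel set $\mc{D}_\epsilon=\{\tx{vec}(\m{P}):V\le\epsilon\}$ instead of a ball: since \eqref{Danh dia dao ham V} makes $V$ nonincreasing, $\mc{D}_\epsilon$ is forward invariant, and the uniform bound $\lambda_1\ge\sigma/4$ on the smallest eigenvalue of $\m{P}^\top\m{P}$ is established directly on $\mc{D}_\epsilon$ via an explicit computation (the identity $\sum_{j\in\mc{N}_i}e_{ji}=\Vert\m{P}\Vert_F^2-\Vert\m{P}^*\Vert_F^2$, continuity of the determinant, and an AM--GM bound on the product of the remaining eigenvalues), where you instead argue softly by continuity of $\sigma_{\min}(\m{P})$ near the compact set $\mc{D}$. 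Your version is repaired by a one-line patch in the same spirit: on the compact set $\Omega\setminus\mc{B}_\rho$ the continuous function $V$ is strictly positive (its zero set is $\mc{D}\subset\mc{B}_\rho$) and attains a minimum $\epsilon'>0$, so $\{V<\epsilon'\}\cap\Omega\subset\mc{B}_\rho$ is forward invariant by monotonicity of $V$, and your comparison estimate should be run on this set rather than on $\mc{B}_\rho$ itself. Everything else (the identity $\m{z}_i^i=-\m{P}\m{e}$, $\Vert\m{e}\Vert=2\sqrt{V}$, compactness of $\Omega$, exclusion of $\mc{S}\setminus\mc{D}$, and the terminal velocity identification) matches the paper's reasoning.
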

\begin{proof}
Using  \eqref{Pro 1: Danh gia tinh bi chan cua V},  we have $V(t)\le V(0)< \displaystyle \min_{\mc{S} \setminus \mc{D}} V, \ \forall t$, and it follows that $\text{vec}(\m{P})$ does not tend to $\mc{S} \setminus \mc{D}$. As a result, $\text{vec}(\m{P})$ tends to $\mc{D}$ as $t\to\infty$.

Let $\xi$ be an arbitrary number in $\left(0,\text{det}^2(\m{P}^{\ast})\right)$. Since $V\rightarrow 0$, we have $\m{P}\rightarrow \m{P}^{\ast}$. Moreover, the determinant is continuous, so $\text{det}^2(\m{P})\rightarrow \text{det}^2(\m{P}^{\ast})$. Consequently, there exists a positive number $\epsilon$ such that $\vert \text{det}^2(\m{P})-\text{det}^2(\m{P}^{\ast})\vert  <\xi $ if $V\le \epsilon$. This inequality implies $\text{det}^2(\m{P})>\text{det}^2(\m{P}^{\ast})-\xi \triangleq\zeta$. 

Since the matrix $\m{P}^\top\m{P}$ is symmetric and positive definite when $V\le\epsilon$, all eigenvalues of $\m{P}^\top\m{P}$ are positive. Let $\lambda_1$ be the minimum eigenvalue of $\m{P}^T\m{P}$ and let $\lambda_2, \ldots, \lambda_n$ be its the remaining eigenvalues. From the structure of the matrix $\m{P}$, there holds 
$\displaystyle \sum_{j\in{\mathcal{N}}_{i}} e_{ji}=\Vert \m{P} \Vert_F^2 - \Vert \m{P}^* \Vert_F^2.$ 
Thus, the following result is obtained
\begin{align*}
4d\epsilon\ge d\displaystyle\sum_{j\in {\mathcal{N}}_i}e_{ji}^2\ge\left(\displaystyle \sum_{j\in{\mathcal{N}}_{i}}e_{ji}\right)^2=\left(\Vert \m{P}\Vert_F^2-\Vert \m{P}^*\Vert_F^2\right)^2.
\end{align*}
It follows that 
$
    \sum_{r=2}^n \lambda_r < \Vert \m{P}\Vert_F^2 \le  \sqrt{4d\epsilon} + \Vert \m{P}^*\Vert_F^2,
$
which leads to the following evaluation
\begin{align} 
\lambda_1&=\frac{\text{det}(\m{P}^\top \m{P})}{\prod_{r=2}^d \lambda_r} \ge \frac{\zeta (d-1)^{d-1}}{(\sum_{r=2}^d \lambda_r)^{d-1}}\nonumber\\
&\ge \frac{\zeta (d-1)^{d-1}}{ \left(\sqrt{4d\epsilon}+\Vert \m{P}^*\Vert_F^2\right)^{d-1}}\triangleq \frac{\sigma}{4}>0.
\end{align}
Hence, $\lambda_1 \geq \frac{\sigma}{4}>0$ in the set $\{\text{vec}(\m{P}):~V \leq \epsilon\}$.

Because $\text{vec}(\m{P})$ approaches to $\mc{D}$ as $t \to \infty$, $\text{vec}(\m{P})$ will converge into the set $\mc{D}_\epsilon = \{\text{vec}(\m{P}): V \le \epsilon \}$ after a finite time $T_1$. 
Equation \eqref{Pro 1: Danh gia tinh bi chan cua V} shows that $V$ is a nonincreasing function, so $V(t)\le V(T_1)\le \epsilon~\forall t\ge T_1$. Hence, the trajectory does not escape from $\mc{D}_\epsilon$. Thus,
$
\Vert \m{z}_i^i \Vert ^2=\m{e}^T\m{P}^T\m{P}\m{e}\ge \sigma \m{e}^T\m{e}/4=\sigma V.
$
Based on this and \eqref{Danh dia dao ham V}, $\dot{V}$ can be evaluated as follows
\begin{align}\label{danh ia dao ham v lan cuoi}
  \dot V \le - k \sigma^{\frac{\alpha+1}{2}}V^{\frac{\alpha+1}{2}},~\forall t\ge T_1.
\end{align}
Denote $\lambda= k \sigma^{\frac{\alpha+1}{2}}>0$ and  $c=\frac{\alpha+1}{2}\in(0,1)$, we will prove that
$
    V(T)=0,~\tx{with}~ T=T_1+\frac{V^{1-c}(T_1)}{\lambda(1-c)}<\infty.
$
Suppose $V(T)\neq 0$. Because $V(t)$ is nonincreasing and nonnegative, we obtain $V(t)>0~\forall t\in [T_1,T]$. Hence, the continuity of $\Dot{V}$ and $V$ shows that  $\frac{\Dot{V}}{V^c}$ is continuous in $[T_1,T]$. This implies the existence of $\int_{T_1}^T \frac{\Dot{V}dt}{V^c}$.  Therefore,
\begin{align}
    \frac{V^{1-c}(T_1)}{c-1}<\displaystyle\int_{T_1}^T \frac{\Dot{V}(t)dt}{V^c(t)}\le \lambda(T_1-T)=\frac{V^{1-c}(T_1)}{c-1}.
\end{align}
This contradiction implies $V(T)=0$. Since $V$ is a  nonnegative and nonincreasing function, it follows that  $V(t)=0~\forall t\ge T$, and this implies that the control law drives the agent to the desired position in a finite time $\tau_i =T$. We have $V = 0$, $\forall t \ge \tau_i$ if and only if $e_{ji} = 0, \ \forall j \in {\mathcal{N}}_i$, $\forall t \ge \tau_i$. Taking the derivative of $e_{ji}=0$ for $t \ge \tau_i$, we have $\m{p}^{*\top}_{ji} (\m{f}^i-\m{v}^i_i) = 0, \  \forall j \in {\mathcal{N}}_i$, where $\m{v}^i_i$ is the velocity of agent $i$. It follows that $\m{P}^{*\top}(\m{f}^i-\m{v}^i_i) = \m{0}$. Because the matrix $\m{P}^{\ast}$ is nonsingular, we have $\m{v}^i_i =\m{f}^i,\ \forall t \ge \tau_i$.
\end{proof}

\begin{Remark}
Under the proposed control law, for $t\ge \tau_i$, the agent $i$ satisfies all desired distance constraints and has the same velocity as the leaders. After that, it is possible to consider agent $i$ as a leader to control the next follower. Thus, the problem of controlling subsequent followers can be solved similarly and the stability of the $n$-agent formation follows from mathematical induction.
\end{Remark}

\begin{Remark}\label{them k2}
To decrease the convergence time, we can modify \eqref{Pro 1: Control Law} as follows:
\begin{align}\label{pro 1: control law fixed-time}
    \m{u}^i_i=-k\tx{sgn}^\alpha(\m{z}^i_i)-k'\tx{sgn}^\beta(\m{z}^i_i)-\gamma\tx{sgn}(\m{z}^i_i),
\end{align}
where $0<\alpha<1$, $\beta = 2-\alpha$ and $k>0$. Under the control law \eqref{pro 1: control law fixed-time}, by a similar analysis, \eqref{danh ia dao ham v lan cuoi} becomes
\begin{align}\label{fixed-time danh gia dao ham V lan cuoi}
  \dot V \le - k_1 V^{\frac{\alpha+1}{2}}- k_2 V^{\frac{\beta+1}{2}},~\forall t\ge T_1,
\end{align}
where $k_1$ and $k_2$ are two strictly positive real numbers. 
It can be seen that in \eqref{fixed-time danh gia dao ham V lan cuoi}, when $V$ is quite large, the component $-k_2V^{\frac{\beta+1}{2}}$ dominates and makes $V$ decreases fast until $V = 1$. After  $V<1$, the component $-k_1V^{\frac{\alpha+1}{2}}$ becomes bigger and makes $V$ converge to $0$ in  finite time. 
\end{Remark}

\begin{Remark}\label{morong}
Let we assume the velocity function $\m{f}(t)$ of the leaders is given by
\begin{align}
    \m{f}(t) = \m{G}(\m{p},t)\m{h}(t),
\end{align}
where $\m{G}(\m{p},t)$ is an unknown continuous matrix function in ${\mb{R}}^{d\times q}$ satisfying $\Vert \m{G}\Vert_F\le \gamma$, where $\gamma>0$ is a known constant, and $\m{h}(t) \in {\mathbb{R}}^q$ is a known bounded continuous function. The proposed control law in this case is similar to \eqref{Pro 1: Control Law}, except that the term $-\gamma\tx{sgn}(\m{z}^i_i)$ becomes $-\gamma\left(\m{1}_{d,q}\vert\m{h}\vert\right)\odot\tx{sgn}(\m{z}^i_i)$, that is, 
\begin{align}\label{control law mo rong}
 \m{u}_i^i=  - k \tx{sgn}^\alpha (\m{z}^i_i) -\gamma\left(\m{1}_{d,q}\vert\m{h}\vert\right)\odot\tx{sgn}(\m{z}^i_i).
\end{align}
Similar to Remark \ref{them k2}, we can also use the following control law to decrease the convergence time
\begin{align}\label{control law mo rong fixed-time}
    \m{u}_i^i=  - k  \tx{sgn}^\alpha (\m{z}^i_i) - k' \tx{sgn}^\beta (\m{z}^i_i) -\gamma\left(\m{1}_{d,q}\vert\m{h}\vert\right)\odot\tx{sgn}(\m{z}^i_i).
\end{align}
Theorem 2 is still correct under these proposed control laws  \eqref{control law mo rong} and \eqref{control law mo rong fixed-time}. 
\end{Remark}

\section{Simulation Results}
In this section, we conduct two simulations to validate the theoretical results in Section 3. In Simulation 1, we consider a nine-agent system in ${\mb{R}}^2$ (2 leaders and 7 followers) and in Simulation 2, we consider a six-agent system in ${\mb{R}}^3$  (3 leaders and 3 followers). The graphs are constructed by Procedure 1, e.g., in Simulation 1, the leader-follower graph is defined as $\mc{G} = (\mc{V},\mc{E})$, where ${\mc{V}} = \{1, {\ldots}, 9\}$ and ${\mc{E}} = \{ (i,j) :~i=3,\ldots, 9, ~j \in \{ i-1, i-2\} \}$.  The graphs and the desired formations in two simulations are depicted in Fig. \ref{topology and target}. The orientation and initial positions of the followers are randomly generated. 
\subsection{Simulation 1}
In this simulation, the control law \eqref{Pro 1: Control Law} is adopted for the agents. The parameters are chosen as follows:~ $\alpha=0.5$, $\gamma=2$, $k=1$. The initial positions of two leaders are  $\m{p}_1(0)=[0, 0]^\top$ and $\m{p}_2(0)=[1, \sqrt{3}]^\top$. The desired distances are all chosen as $d^*_{ji}=2$. The leaders' velocity is $\m{f}=\frac{\gamma}{2}[\sin(a_1t),~ \cos(a_2t)]^\top$, where $a_1$ and $a_2$ were randomly selected. 

Trajectories of nine agents and the squared distance errors $e_{ji}=d^2_{ji}-d^{*2}_{ji}$ are depicted in Fig. \ref{Fig: Sim 1}. As can be seen in Fig.~\ref{Fig: Sim 1}, for $t\ge 1$ s, $e_{ji}$ are all zero. The agents then have the same velocity, and thus the formation shape is maintained after that.
\begin{figure*}[ht]
    \centering
    \includegraphics[height=5cm]{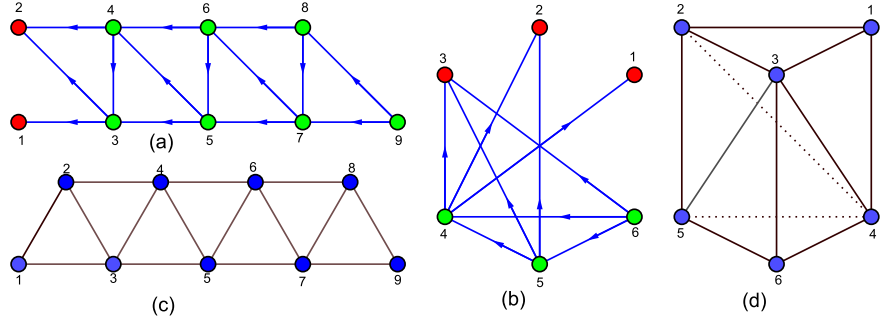}
    \caption{The leader-follower graphs in Sim. 1 (a) and Sim. 2 (b); and the desired formations in Sim. 1 (c) and Sim. 2 (d). In (a, b), the red points and green points represent leaders and followers respectively.}
    \label{topology and target}
\end{figure*}
\begin{figure*}[ht]
    \centering
   \includegraphics[height=4.6cm]{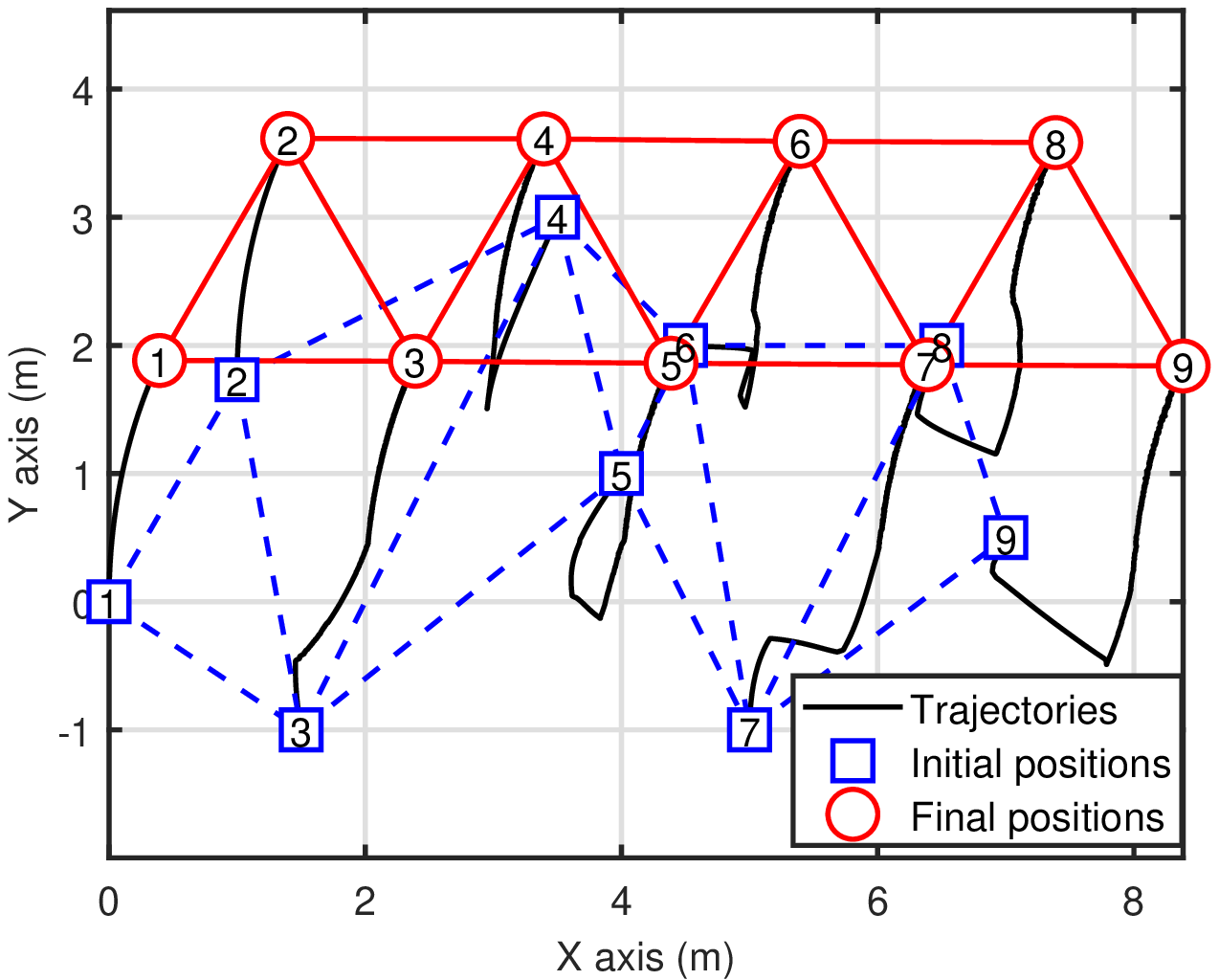}
    \qquad\qquad
    \includegraphics[height=5cm]{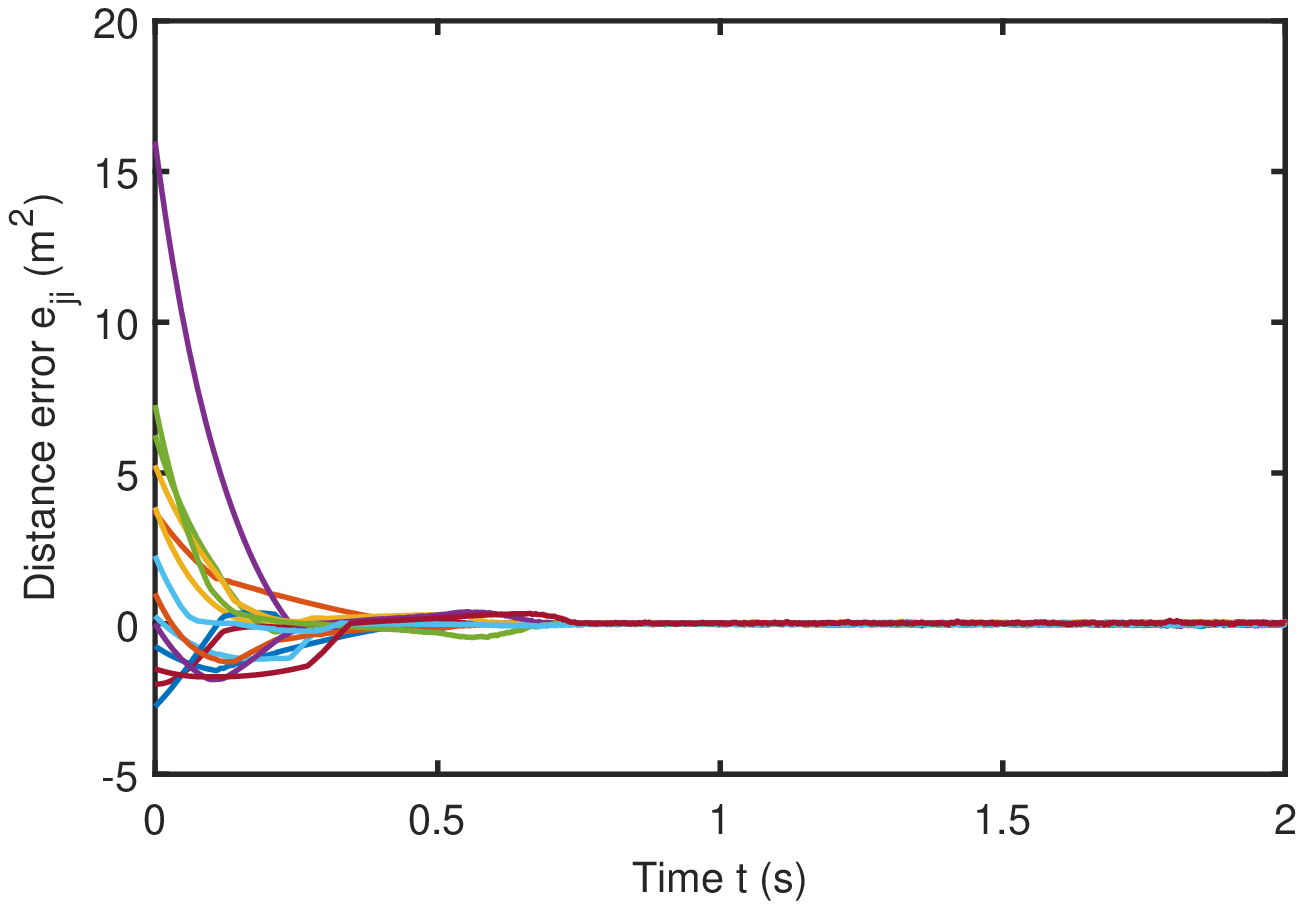}
    \caption{Simulation 1: Trajectories of agents (left), and Distance errors between agents (right)}
    \label{Fig: Sim 1}
\end{figure*}
\subsection{Simulation 2}

In this simulation, the initial positions of three leaders are $\m{p}_1(0)=[2, 0, 3]^\top$, $\m{p}_2(0)=[0, 0, 3]^\top$,  and $\m{p}_3(0)=[1, -\sqrt{3}, 3]^\top$. The desired distances are $d^*_{14}=d^*_{25}=d^*_{35}=d^*_{45}=d^*_{36}=d^*_{46}=d^*_{56}=2$ and $d^*_{24}=d^*_{34}=2\sqrt{2}$. Therefore, the desired formation is an uniform triangular prism. With arbitrary numbers $a_i,~ i=3, \ldots, 8$, the matrix $\m{G}$ and the vector $\m{h}$ in Remark \ref{morong} are chosen as
\begin{align}
    \m{G}=\frac{\gamma}{\sqrt{6}}
    \begin{bmatrix}
    \sin(a_3t)& \cos(a_4t)& \sin(a_5t)\\
    \cos(a_6t)& \sin(a_7t)& \cos(a_8t)
    \end{bmatrix}^\top,
\end{align}
and $\m{h}=[\sin(t),\cos(0.5t)]^\top$. The control laws \eqref{control law mo rong} with $k=1$ is adopted for Simulation 2a and the control law \eqref{control law mo rong fixed-time} with $k=k'=1$ is adopted in Simulation 2b. 

Figure \ref{Fig: Sim 2} depicts the trajectories of agents and the squared distance errors versus time. We also observe that the desired formation is achieved in a finite time, and the agents move at the same velocity after that. Moreover, by comparing the squared distance errors vs time in Simulations 2a and 2b, we also observe that the convergence time decreases when \eqref{control law mo rong fixed-time} is adopted.
\section{Conclusions}
In this paper, the distance-based formation tracking problem was studied for formations with the leader-following topology. The proposed control law guarantees the followers to maintain the desired distances with regard to its leaders and move at the same velocity with the leader after a finite time. Simulation results were also provided to accompanied with the analysis.
\begin{figure*}[ht]
    \centering
    \includegraphics[height=4.05cm]{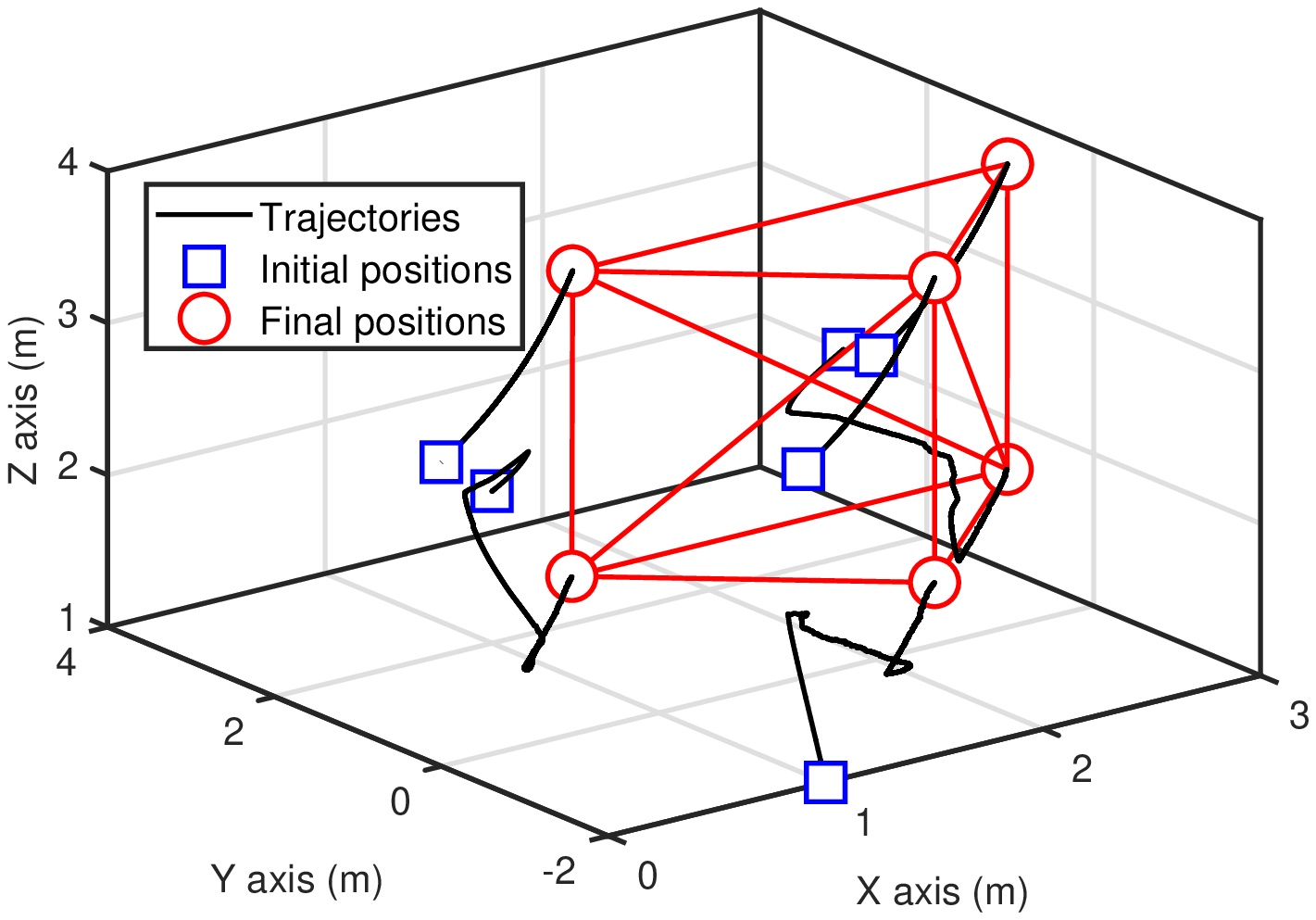}
    \includegraphics[height=4.05cm]{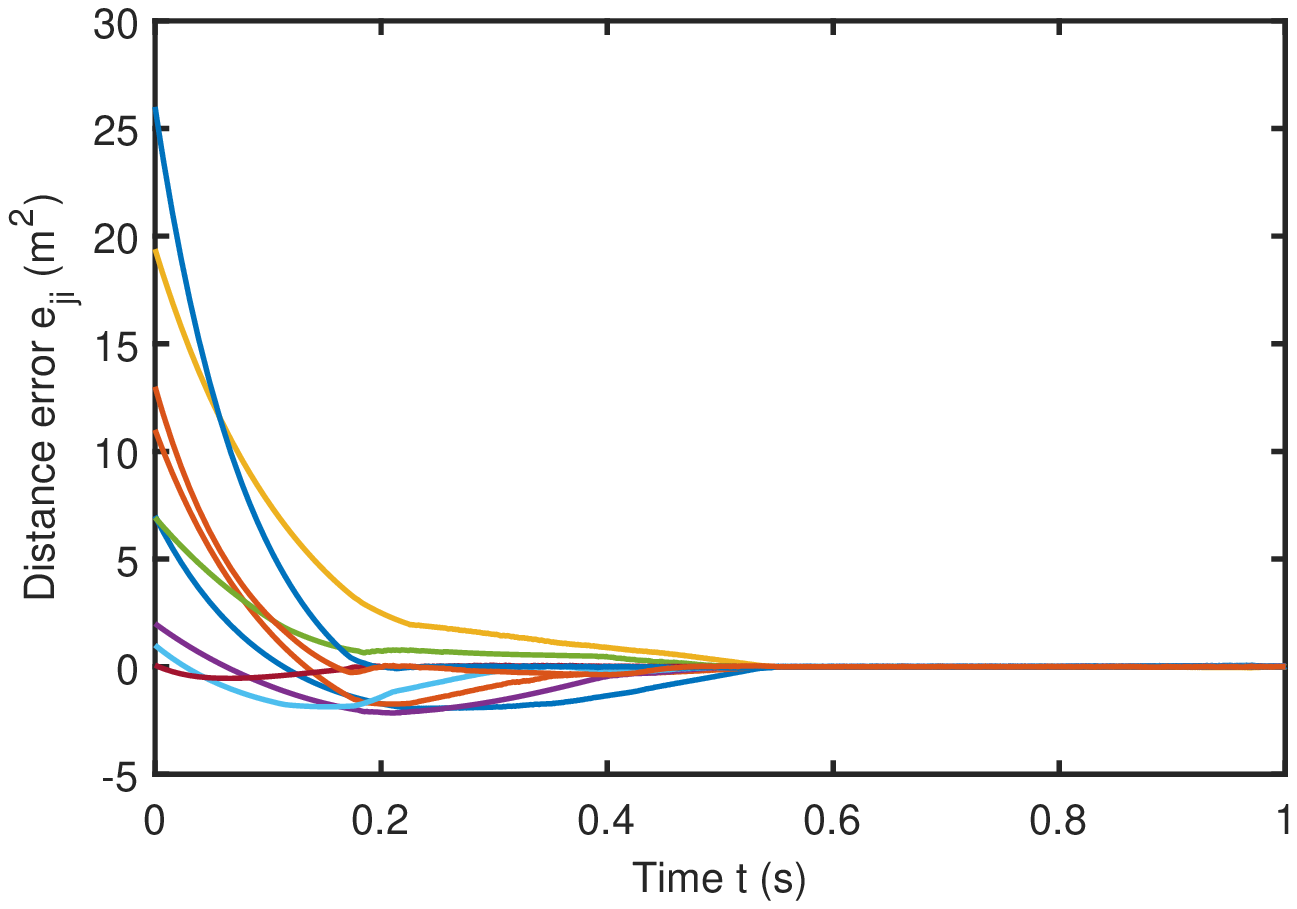}
    \includegraphics[height=4.05cm]{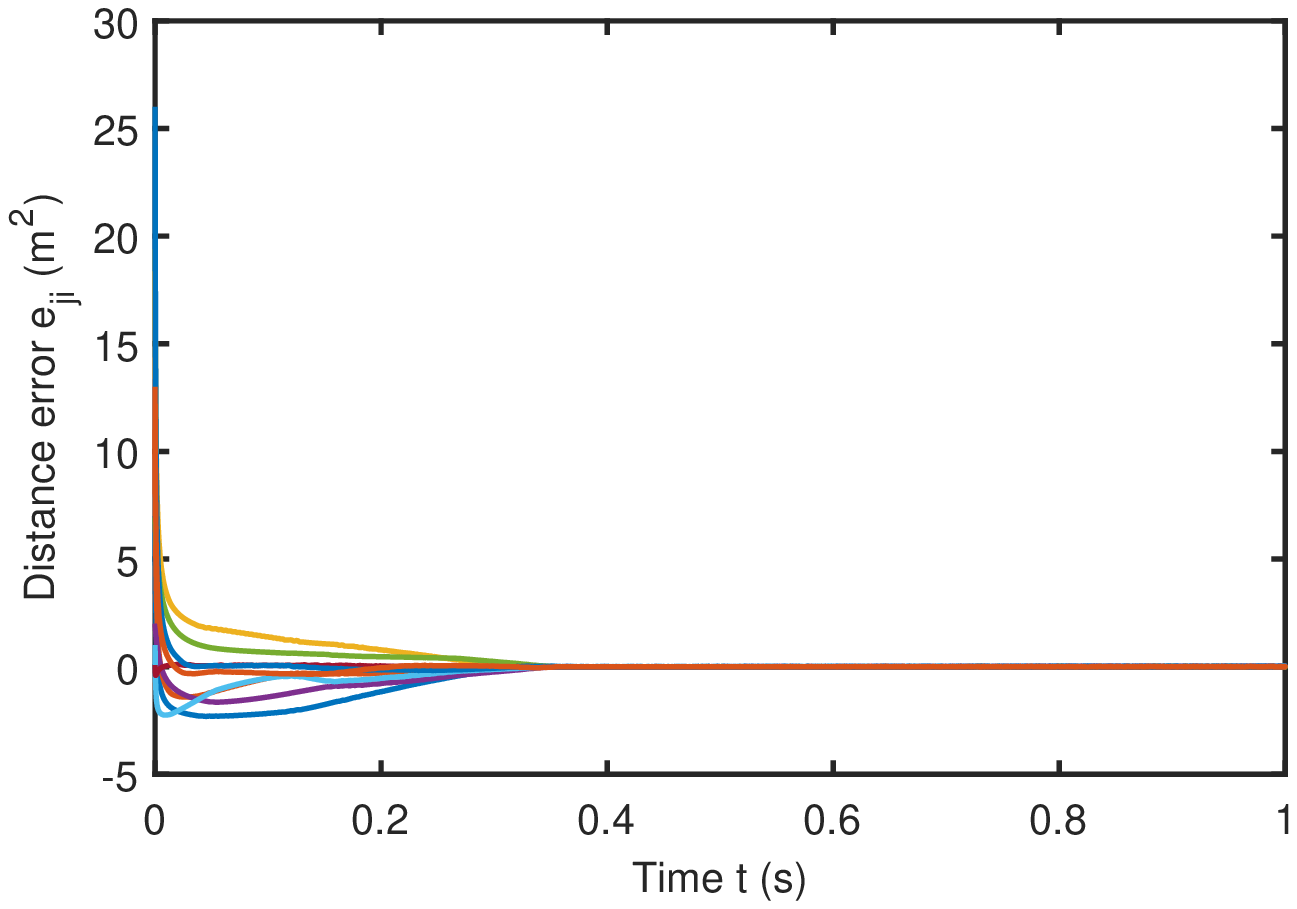}
    \caption{Trajectories of agents in Sim. 2a (left) and Distance errors between agents in Sim. 2a (center) and Sim. 2b (right)}
    \label{Fig: Sim 2}
\end{figure*}
\section*{Acknowledgment}
This work was supported by the National Research Foundation of Korea (NRF) under the grant NRF- 2017R1A2B3007034.

\bibliographystyle{IEEEtran}
\bibliography{IEEEabrv,misalignment}
\end{document}